\RequirePackage{xcolor}
\documentclass[journal,twoside,web]{ieeecolor}
\PassOptionsToPackage{usenames,dvipsnames}{xcolor}
\usepackage{generic}
\usepackage{cite}
\usepackage{amsmath,amssymb,amsfonts}
\usepackage{nicefrac}
\usepackage{algorithmic}
\usepackage{graphicx}
\usepackage{algorithm,algorithmic}
\usepackage{textcomp}
\usepackage{soul}
\usepackage{graphics} % for pdf, bitmapped graphics files
\usepackage{epsfig} % for postscript graphics files
\usepackage{mathptmx} % assumes new font selection scheme installed
\usepackage{times} % assumes new font selection scheme installed
\usepackage[acronym]{glossaries}
\usepackage{cleveref}
\usepackage{graphicx}
\usepackage{subcaption}
\usepackage{pstricks} 				% Plotting package
\usepackage{pgfplots} 				% Plotting package
\usepackage{pgfplotstable} 			% Plotting package
\usepackage{tikz}
\usepackage{nicematrix}
\usepackage{commath}
\usepackage{algorithm}
\usepackage{mathtools}
\usepackage{thmtools}
\usepackage{siunitx}
\usepackage{scalerel}
\usepackage{relsize, exscale}
\usepackage{nccmath}
\usepackage{booktabs}
\usepackage{multirow}
\usepackage{bm}
\usepackage{aliascnt}
\def\BibTeX{{\rm B\kern-.05em{\sc i\kern-.025em b}\kern-.08em
    T\kern-.1667em\lower.7ex\hbox{E}\kern-.125emX}}
\newacronym{LTI}{LTI}{linear time-invariant}
\newacronym{SDP}{SDP}{semidefinite program}
\newacronym{MIP}{MIP}{mixed-integer program}
\newacronym{IQC}{IQC}{integral quadratic constraint}
\newacronym{SOS}{SOS}{sum of squares}
\newacronym{UAV}{UAV}{unmanned aerial vehicle}
\newacronym{ROA}{RoA}{region of attraction}
\newacronym{ReLU}{ReLU}{Rectified Linear Unit}
\newacronym{REN}{REN}{recurrent equilibrium network}
\newacronym{RNN}{RNN}{recurrent neural network}
\newacronym{GAS}{GAS}{globally asymptotically stable}
\newacronym{LAS}{LAS}{locally asymptotically stable}
\newacronym[longplural=Linear Matrix Inequalities]{LMI}{LMI}{linear matrix inequality}
\newacronym{LSTM}{LSTM}{long short-term memory}
\newacronym{LQR}{LQR}{linear-quadratic regulator}
\newacronym{MPC}{MPC}{model predictive control}
\newacronym{DoF}{DoF}{degree of freedom}
\newacronym{RHP}{RHP}{right-half plane}
\newacronym[longplural=equations of motion]{EoM}{EoM}{equation of motion}
\newacronym{CoG}{CoG}{center of gravity}
\newacronym{NNC}{NNC}{neural-network-based controller}
\newacronym{NN}{NN}{neural network}
\newacronym{MIQP}{MIQP}{mixed-integer quadratic program}
\newacronym{MILP}{MILP}{mixed-integer linear program}
\newacronym{INDI}{INDI}{incremental nonlinear dynamic inversion}
\newacronym{VTOL}{VTOL}{vertical take-off and landing}
\newcommand{\realsN}[1]{\ensuremath{\mathbb{R}^{#1}}}

\newcommand{\transpose}{^\top}

\DeclareMathAlphabet{\mathcal}{OMS}{cmsy}{m}{n}
\useRomanappendicesfalse
\usepgfplotslibrary{groupplots} 	
\usepgfplotslibrary{colormaps}
\usetikzlibrary{plotmarks}
\usetikzlibrary{calc}
\usetikzlibrary{shapes.geometric}
\usetikzlibrary{arrows.meta}
\usetikzlibrary{patterns,patterns.meta}
\pgfplotsset{compat=newest}
\pgfplotsset{plot coordinates/math parser=false}
\newlength\figureheight
\newlength\figurewidth
\pgfplotsset{/pgfplots/layers/niceLayers/.define layer set={
		axis background,axis grid,main,axis ticks,axis lines,axis tick labels,axis descriptions,axis foreground
	}{/pgfplots/layers/standard}
}
\pgfplotsset{every axis/.append style={
		set layers=niceLayers,
		tick label style={font=\scriptsize},
		clip marker paths=true,
		line width=1.5pt,
		line cap=round,
		line join=round,
		tick style={semithick, color=black}
}}
\newtheorem{thm}{Theorem}[section]

\newtheorem{cor}[thm]{Corollary}
\newtheorem{defn}[thm]{Definition}

\newtheorem{rem}{Remark}

\crefname{thm}{Theorem}{Theorems}
\crefname{lemma}{Lemma}{Lemmas}
\crefname{prop}{Proposition}{Propositions}
\crefname{cor}{Corollary}{Corollaries}
\crefname{defn}{Definition}{Definitions}
\crefname{conj}{Conjecture}{Conjectures}
\crefname{exmp}{Example}{Examples}
\crefname{rem}{Remark}{Remarks}
\crefname{assume}{Assumption}{Assumptions}
\crefname{equation}{}{} % Remove "eq" before equation references
\Crefname{equation}{Equation}{Equations} % Remove "eq" before equation references
\crefname{figure}{Fig.}{Figs.} % Ensure figures are correctly referenced
\Crefname{figure}{Fig.}{Figs.} % Ensure figures are correctly
\crefname{table}{Table}{Tables} % Ensure figures are correctly referenced
\Crefname{table}{Table}{Tables} % Ensure figures are correctly
\crefname{subfigure}{Fig.}{Figs.} % Ensure figures are correctly referenced
\Crefname{subfigure}{Fig.}{Figs.} % Ensure figures are correctly 
\crefname{section}{Section}{Sections} % Ensure sectiions are referenced are correctly 
\crefname{algorithm}{Algorithm}{Algorithms} % Ensure algorithms are referenced are correctly
\Crefname{algorithm}{Algorithm}{Algorithms} % Ensure algorithms are referenced are correctly
\crefname{appendix}{Appendix}{Appendices}
\Crefname{appendix}{Appendix}{Appendices} 
\DeclareSIUnit{\deg}{deg}
\usepackage{fancyhdr}
\fancypagestyle{firstpage}{
  \fancyhf{}
  
  \fancyhead[C]{\footnotesize This work has been submitted to the IEEE for possible publication. Copyright may be transferred without notice, after which this version may no longer be accessible.}
}
\begin{document}
\title{SOS-based Stability Verification for Saturated INDI Control of Hybrid-VTOL Aircraft Pitch Rate Dynamics}
\author{Dalim Wahby$^\dagger$, Lorenzo Schenk$^\dagger$, Guillaume Ducard, \IEEEmembership{Senior Member, IEEE}
\thanks{$^\dagger$Dalim Wahby and Lorenzo Schenk contributed equally to this work.}%
\thanks{D. Wahby and G. Ducard are with Universit{\'e} C\^{o}te d`Azur I3S CNRS, 06903 Sophia Antipolis, France. (E-mail: dalim.wahby@univ-cotedazur.fr; guillaume.ducard@univ-cotedazur.fr)}%
\thanks{L. Schenk is with the Institute for Dynamic Systems and Control (IDSC), Department of Mechanical and Process Engineering, Swiss Federal Institute of Technology (ETH) Zurich, Leonhardstrasse 21, 8092 Zurich, Switzerland. (E-mail: lschen@ethz.ch)}%
\thanks{This work was supported by the French government through the France 2030 investment plan managed by the National Research Agency (ANR), as part of the Initiative of Excellence Universit{\'e} C\^{o}te d`Azur under reference number ANR- 15-IDEX-01.}
}

\maketitle
\thispagestyle{firstpage}
\begin{abstract}
\Acrfull{INDI} is a prominent flight-control strategy valued for its robust disturbance rejection; however, its formal stability verification has traditionally been limited to linearized dynamical models. This paper presents a formal nonlinear stability certificate for a saturated \acrshort{INDI} pitch-rate controller for a hybrid \acrfull{VTOL} aircraft by representing the \acrshort{INDI} controller via an equivalent \acrfull{REN}. By casting the saturated \acrshort{INDI} architecture as a \acrshort{REN}, the closed-loop dynamics are exactly mapped to an augmented state-feedback system. This structural equivalence enables the use of \acrfull{SOS} programming to synthesize a locally valid Lyapunov function without relying on conservative bounding approximations. The resulting certificate yields an inner estimate of the \acrfull{ROA} that explicitly accounts for actuator saturation, formally verifying the controller's stability in operating regimes where standard linear margins lose their validity. 
\end{abstract}

\begin{IEEEkeywords}
% Enter key words or phrases in alphabetical order, separated by commas. Using the IEEE Thesaurus can help you find the best standardized keywords to fit your article. Use the thesaurus access request form for free access to the IEEE Thesaurus: \underline{https://www.ieee.org/publications/services/thesaurus-acce}\\
% \underline{ss-page.com.}
closed-loop stability, Lyapunov, vertical take-off and landing (VTOL) aircraft, incremental nonlinear dynamic inversion (INDI), nonlinear control, semidefinite programming (SDP), sum of squares (SOS)
\end{IEEEkeywords}
 
\section{Introduction}
\label{sec:introduction}
\IEEEPARstart{I}{ncremental} nonlinear dynamic inversion (\acrshort{INDI}) has become a widely known flight-control strategy for \acrfull{VTOL} \acrfullpl{UAV} \cite{smeur2016adaptive, Schlatter2024}, as depicted in \cref{fig:vtol_uav}. By commanding actuator increments derived from onboard angular-acceleration measurements and a local control-effectiveness estimate, \acrshort{INDI} avoids modeling the feedback nonlinearity of classical nonlinear dynamic inversion (NDI) \cite{ducard2008ndi1, ducard2008ndi2}, reducing sensitivity to aerodynamic uncertainty while retaining strong disturbance-rejection properties \cite{park2025adaptive}, \cite{Rotaetal2026}. 
Closed-loop stability of such inner loops has mainly been characterized through linear, linearized, or sampled-data analyses \cite{Schenk2026,veld2018stability,Lu2020}. 
However, when the pitch-rate loop is excited near its authority limit, the elevon deflection saturates, causing the commanded and applied signals to decouple \cite{Schlatter2024}. A formal nonlinear stability certificate that explicitly encompasses this saturation is therefore required to characterize the true local stability properties of the \acrshort{INDI} pitch-rate loop. However, to the best of the authors' knowledge, a stability certificate of this type has yet to be reported.

The \acrshort{INDI} requires internal states to realize the low-pass filtering of angular acceleration and control signals. When combined with actuator saturation, which is shown in \cite{Korda2022} to decompose exactly into two \acrfull{ReLU} functions, the resulting controller architecture becomes structurally equivalent to a \acrfull{RNN}. By casting this saturated \acrshort{INDI} architecture as a specialized \acrfull{REN}, the stability verification problem can be addressed via tools recently developed for the verification of \acrshortpl{NNC}.

One prominent class of verification methods relies on \acrfull{MIP} and bound-propagation techniques to establish stability by formally verifying reachability \cite{Rober2023, Kotha2023} or Lyapunov conditions \cite{Zhou2022, Chang2019, Wu2023, Yang2024} over bounded continuous domains. While these approaches scale well with the size of the neural network, their reliance on state-space partitioning and branch-and-bound search causes computational complexity to scale unfavorably with the system's state and recurrent dimensions. As the saturated \acrshort{INDI} architecture requires an augmented state space to account for its internal filters and estimators, applying these partitioning methods would introduce an unnecessary computational cost.

A second family of verification tools consists of \acrshort{SDP}-based methods. While many such approaches rely on sector or quadratic constraints \cite{Yin2022, Richardson2023} to approximate and bound nonlinearities, a specific subset utilizes semialgebraic sets to represent the system's exact input-output relations \cite{Korda2022, Detailleur2025}. This framework has been proven compatible with the broad class of \acrshortpl{REN} \cite{Detailleur2025}, which encompasses \acrshortpl{RNN} \cite{Revay2024}. As the \acrshort{ReLU} elements representing actuator saturation admit a semialgebraic description, the exact input-output properties of the recurrent architecture can be captured entirely through polynomial (in)equalities. Crucially, \acrfull{SOS}-based methods have been shown to scale to closed-loop systems comprising combined totals of hundreds of states, hidden activation functions, and control inputs \cite{Detailleur2025, Detailleur2025Sigi, Korda2022}. By formulating the stability verification problem via \acrshort{SOS} programming, the proposed methodology eliminates bounding conservatism while easily scaling to the augmented state dimension of the \acrshort{INDI} architecture.

\begin{figure}[t]
    \centering
    \includegraphics[width=0.8\linewidth]{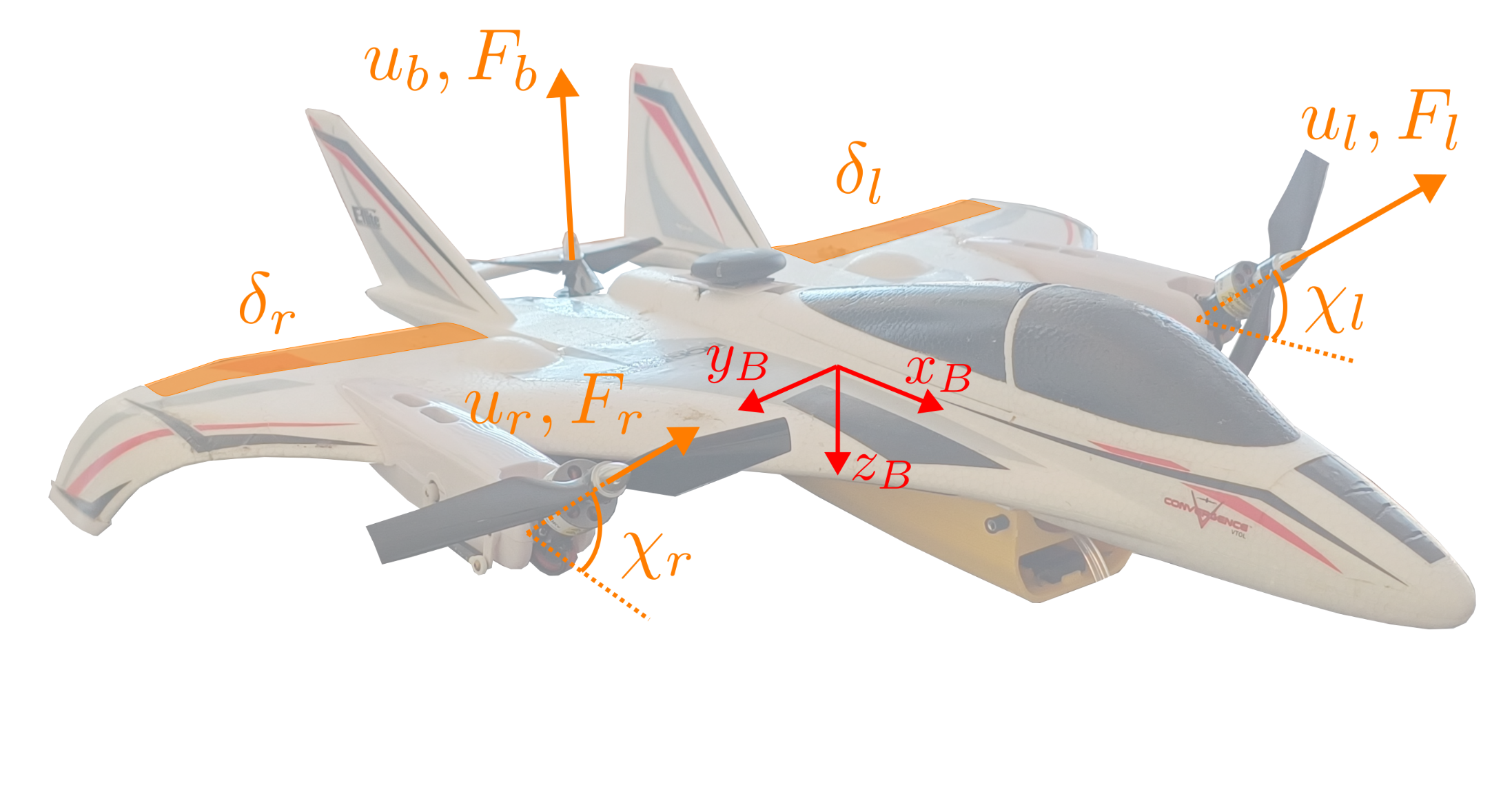}
    \caption{Schematic of the VTOL UAV \cite{Schlatter2024}}
    \label{fig:vtol_uav}
\end{figure}

\subsection{Contributions}
\label{sec:Introduction_Contributions}
While \acrshort{INDI} pitch-rate controllers provide robust and good performance in practice, formally verifying their stability in the presence of nonlinearities, such as actuator saturation, remains an unsolved problem. This paper leverages semialgebraic-set-based stability verification procedure of \cite{Korda2022, Detailleur2025} to the saturated \acrshort{INDI} architecture proposed in \cite{Schlatter2024}. The specific contributions are:
\begin{enumerate}
    \item \textbf{\acrshort{REN} modeling of the \acrshort{INDI} controller and state-estimator:} The saturated \acrshort{INDI} controller, including the state-estimator, is formally proven to be equivalent to a \acrshort{REN}, a class of \acrshortpl{NN} that includes \acrshortpl{RNN}.
    \item \textbf{Augmented closed-loop formulation:} The feedback interconnection of the system and the \acrshort{REN} is reformulated as an augmented system controlled by a static, saturated state-feedback operator. This structural equivalence is leveraged to initialize the stability verification procedure and define the search region.
    \item \textbf{Local stability certificate for the closed-loop system under actuator saturation:} Semialgebraic-set-based stability verification is utilized to synthesize a locally valid Lyapunov function, which provides a certified inner estimate of the \acrfull{ROA} that reaches into the saturated region of the state-space.
\end{enumerate}

The remainder of this paper is organized as follows. \Cref{sec:SystemModelandControllerArchitecture} presents the aircraft pitch model, the \acrshort{INDI} controller architecture, and the state estimator. \Cref{sec:SOS-basedStabilityVerification} details the stability verification procedure and its instantiation to the present system. Numerical results are reported in \cref{sec:Results}, and \cref{sec:Conclusion} concludes this paper with a brief outlook on future work.

\subsection{Notation}
\label{sec:Introduction_Notation}
This work uses the following notational conventions:
\begin{itemize}
    \item In the analysis of discrete-time systems, the plus superscript indicates the successor variable, e.g. $x, \ x^+ \in \realsN{n_x}$ represent the current and successor state, respectively.
    \item All inequalities are defined element-wise. $P \succ 0$, $P \succeq 0$ denote a positive definite and positive semidefinite matrix $P$, respectively.
    \item The notation $\mathcal{M}(x, \,n)$ denotes the vector of all unique products of $n$ entries of $x$, e.g. $\mathcal{M}(\bigl[ \begin{smallmatrix} x_1 \\ x_2 \end{smallmatrix} \bigr], \,2) = [x_1^2, \, x_1x_2, \, x_2^2]\transpose$.
    \item The identity matrix is denoted $I$.
    \item A continuous function $\alpha \colon \mathbb{R}_{\ge 0} \to \mathbb{R}_{\ge 0}$ is said to belong to class $\mathcal{K}$ if it is strictly increasing and $\alpha(0) = 0$. Furthermore, $\alpha$ belongs to class $\mathcal{K}_\infty$ if it belongs to class $\mathcal{K}$ and is radially unbounded, i.e., $\lim_{x \to \infty} \alpha(x) = \infty$.
    \item A set $\mathcal{X} \subseteq \mathbb{R}^{n_{x}}$ is said to be positively invariant with respect to the system dynamics if $ x \in \mathcal{X}$ implies $f\big(x, \varphi(x)\big) \in \mathcal{X}$.
    \item Unless specified otherwise, all norms represent the Euclidean norm.
\end{itemize}

\section{System Model and Controller Architecture}
\label{sec:SystemModelandControllerArchitecture}
In this section, the mathematical models for the aircraft system, the control architecture, and the state estimator are presented, as shown in \cref{fig:pitch_rate_controller}. Subsequently, the actuator saturation is modeled through an exact \acrshort{ReLU} decomposition, which is required for the subsequent stability verification.
\begin{figure*}[t]
    \centering
    \includegraphics[width=\linewidth]{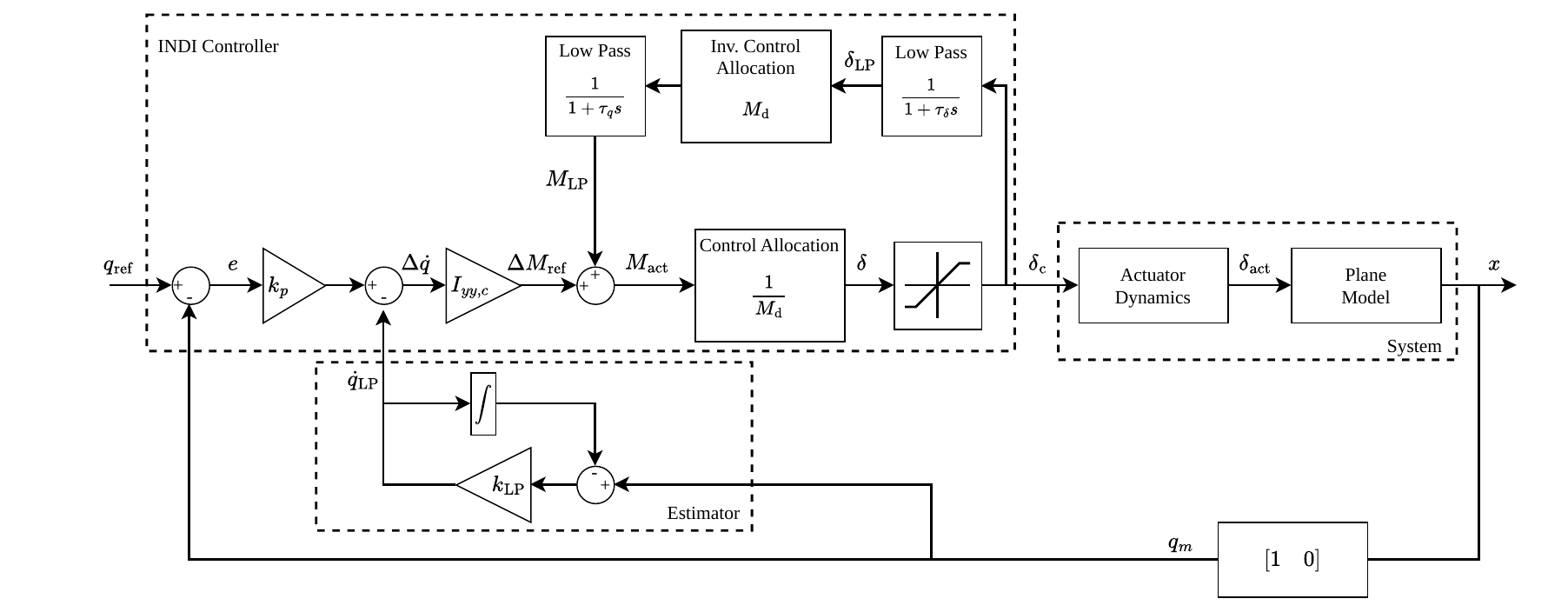}
    \caption{Schematic of the \acrshort{INDI} pitch rate controller in closed-loop with the system.}
    \label{fig:pitch_rate_controller}
\end{figure*}

\subsection{Aircraft Pitch Model}
\label{sec:SystemModel}
Linearized longitudinal dynamics about a trimmed flight condition reduce to the single-axis pitch \acrfull{EoM}
\begin{equation}
  I_{yy,m}\,\dot{q}_{\textrm{m}} = M_m\,\delta_{\textrm{act}} - d_q\,q_{\textrm{m}},
  \label{eq:pitchEOM}
\end{equation}
where $q_{\textrm{m}}$ is the pitch rate, $M_m$ the pitch-moment effectiveness, $d_q > 0$ the aerodynamic pitch-rate damping coefficient, and $I_{yy,m}$ the true moment of inertia. The actuator dynamics are modeled as a first-order lag with time constant
$\tau_{\textrm{act}}$
\begin{equation}
  \dot\delta_{\textrm{act}} = \frac{1}{\tau_{\textrm{act}}}
                     \bigl(\delta_{\textrm{c}} - \delta_{\textrm{act}}\bigr),
  \label{eq:actuatorDynamics}
\end{equation}
where $\delta_{\textrm{c}} \in [\delta_{\textrm{min}}, \delta_{\textrm{max}}]$ the commanded, saturated elevon deflection. Since $\delta_{\textrm{act}}$ is uniquely determined by $q_{\textrm{m}}$ and $\dot q_{\textrm{m}}$ through \cref{eq:pitchEOM}, it can be eliminated algebraically. The resulting continuous-time state-space representation with state $x = [q_{\textrm{m}},\, \dot q_{\textrm{m}}]\transpose \in \realsN{2}$, is given by
\begin{subequations}
\label{eq:continuousPlantGroup}
    \begin{align}
        \dot x &= A_{p,c}\,x + B_{p,c}\,\delta_{\textrm{c}}, \label{eq:continuousPlant} \\[8pt]
        A_{p,c} &= \begin{bmatrix}
            0 & 1 \\[4pt]
            -\dfrac{d_q}{I_{yy,m}\tau_{\textrm{act}}} &
            -\dfrac{d_q}{I_{yy,m}} - \dfrac{1}{\tau_{\textrm{act}}}
        \end{bmatrix}, \\
        B_{p,c} &= \begin{bmatrix} 0 \\[4pt]
            \dfrac{M_m}{I_{yy,m}\tau_{\textrm{act}}} \end{bmatrix}. \label{eq:plantMatrices}
    \end{align}
\end{subequations}
A forward-Euler discretization at sampling period $T_s$ gives the discrete-time matrices $A_p = I + T_s A_{p,c}$ and $B_p = T_s B_{p,c}$.

\subsection{INDI Pitch-Rate Controller and State Estimator}
\label{sec:ControllerArchitecture}
The \acrshort{INDI} pitch-rate controller used in this work consists of four interconnected blocks: a proportional error path, an inertia-scaled moment reference, a moment estimator, and an inverse control-allocation block. The internal states of the \acrshort{INDI} controller are defined as follows
\begin{subequations}
\label{eq:controllerODEs}
\begin{align}
  \dot\delta_{\textrm{LP}} &= \frac{1}{\tau_\delta}
                     \bigl(\delta_{\textrm{c}} - \delta_{\textrm{LP}}\bigr),
                   \label{eq:deltaLP}\\
  \dot M_{\textrm{LP}}     &= \frac{1}{\tau_q}
                     \bigl(M_d\,\delta_{\textrm{LP}} - M_{\textrm{LP}}\bigr),
                   \label{eq:MLP}
\end{align}
\end{subequations}
where $\tau_\delta$ and $\tau_q$ are first-order filter time constants, and $M_d$ is the assumed control effectiveness used in both the control and inverse control allocation. Note that $\delta_{\textrm{c}}$, not the pre-saturation command $\delta$, drives \cref{eq:deltaLP}, thus, the estimator reconstructs the moment that was applied to the system.

To obtain a pitch-rate-estimate a low-pass filter is used. The resulting rate-estimate dynamics can be expressed as follows
\begin{equation}
    \dot q_{\textrm{LP}}     = k_{\textrm{LP}}\bigl(q_{\textrm{m}} - q_{\textrm{LP}}\bigr), 
    \label{eq:qLP}
\end{equation}
where $k_{\textrm{LP}}$ is the corner frequency of the pitch-rate-estimator. Additionally, the nominal time-alignment condition $k_{\textrm{LP}} = 1/\tau_q$, $\tau_\delta = \tau_{\textrm{act}}$ ensures phase consistency between the estimator and the actuator \cite{Schenk2026, veld2018stability}.

The incremental acceleration demand is $\Delta\dot q = k_p(q_\textrm{ref} - q_{\textrm{m}}) - \dot q_{\textrm{LP}}$, where $\dot q_{\textrm{LP}}$ is the low-pass angular-acceleration estimate obtained from \cref{eq:qLP}, $q_\textrm{ref}$ is the reference pitch and $k_p$ is the proportional gain. The moment reference $\Delta M_\textrm{ref} = I_{yy,c}\,\Delta\dot q$ is formed using the controller inertia estimate $I_{yy,c}$.
The total moment command $M_\textrm{act} = \Delta M_\textrm{ref} + M_{\textrm{LP}}$ is then inverted to give the pre-saturation deflection command
\begin{equation}
  \delta = \frac{I_{yy,c}[k_p(q_{\textrm{ref}}-q_{\textrm{m}})-\dot{q}_{\textrm{LP}}] + M_{\textrm{LP}}}{M_d} \;.
  \label{eq:preSatCommand}
\end{equation}

The elevon deflection is hard-limited to $\delta_{\textrm{c}} \in [\delta_{\textrm{min}}, \delta_{\textrm{max}}]$, thus, the saturation should be modeled to be taken into account in the closed-loop dynamics. Following \cite{Korda2022, Detailleur2025}, the saturation is decomposed
\emph{exactly} into the composition of two \acrshort{ReLU} functions, hence, it holds for all $\delta \in \mathbb{R}$
\begin{subequations}
\label{eq:ReLU}
    \begin{align}
      \lambda_1 & = \textrm{ReLU}(\delta - \delta_{\textrm{min}}), \label{eq:lambda_1}\\
      \lambda_2 & = \textrm{ReLU}(\delta - \delta_{\textrm{max}}), \label{eq:lambda_2}\\
      \delta_{\textrm{c}} & = \delta_{\textrm{min}} + \lambda_1 - \lambda_2,
      \label{eq:satReLU}
    \end{align}
\end{subequations}
where $\textrm{ReLU}(x) = \max(0,x)$. This control law has been validated in simulation \cite{Schlatter2024}; however, formal stability analyses have until now been restricted to the linearized system \cite{Schenk2026}, where the saturation is neglected.

\section{SOS-Based Stability Verification}
\label{sec:SOS-basedStabilityVerification}
In this section, a formal procedure for verifying the local stability of the closed-loop system with respect to the origin, i.e. $q_{\textrm{ref}}=0$, is presented. The methodology is structured into two primary phases. First, the closed-loop dynamics are transformed into a semialgebraic system model to satisfy the requirements of \acrshort{SOS} programming. Second, a sequence of \acrshortpl{SDP} is formulated to synthesize a locally valid Lyapunov function, providing a certified inner estimate of the \acrshort{ROA}.

\subsection{SOS Compatible System Model}
The system must be expressed via semialgebraic sets, i.e. sets defined exclusively by polynomial (in)equalities. The derivation of this exact format for the saturated \acrshort{INDI} architecture is detailed here. The process begins with establishing the equivalence of the \acrshort{INDI} controller to a \acrshort{REN}, which is then integrated into a normalized, augmented closed-loop system representation. This model is subsequently used to derive the specific polynomial constraints for the system nonlinearities, yielding the formal mathematical structure required for the stability verification.
\begin{figure}
    \centering
    \includegraphics[width=\linewidth]{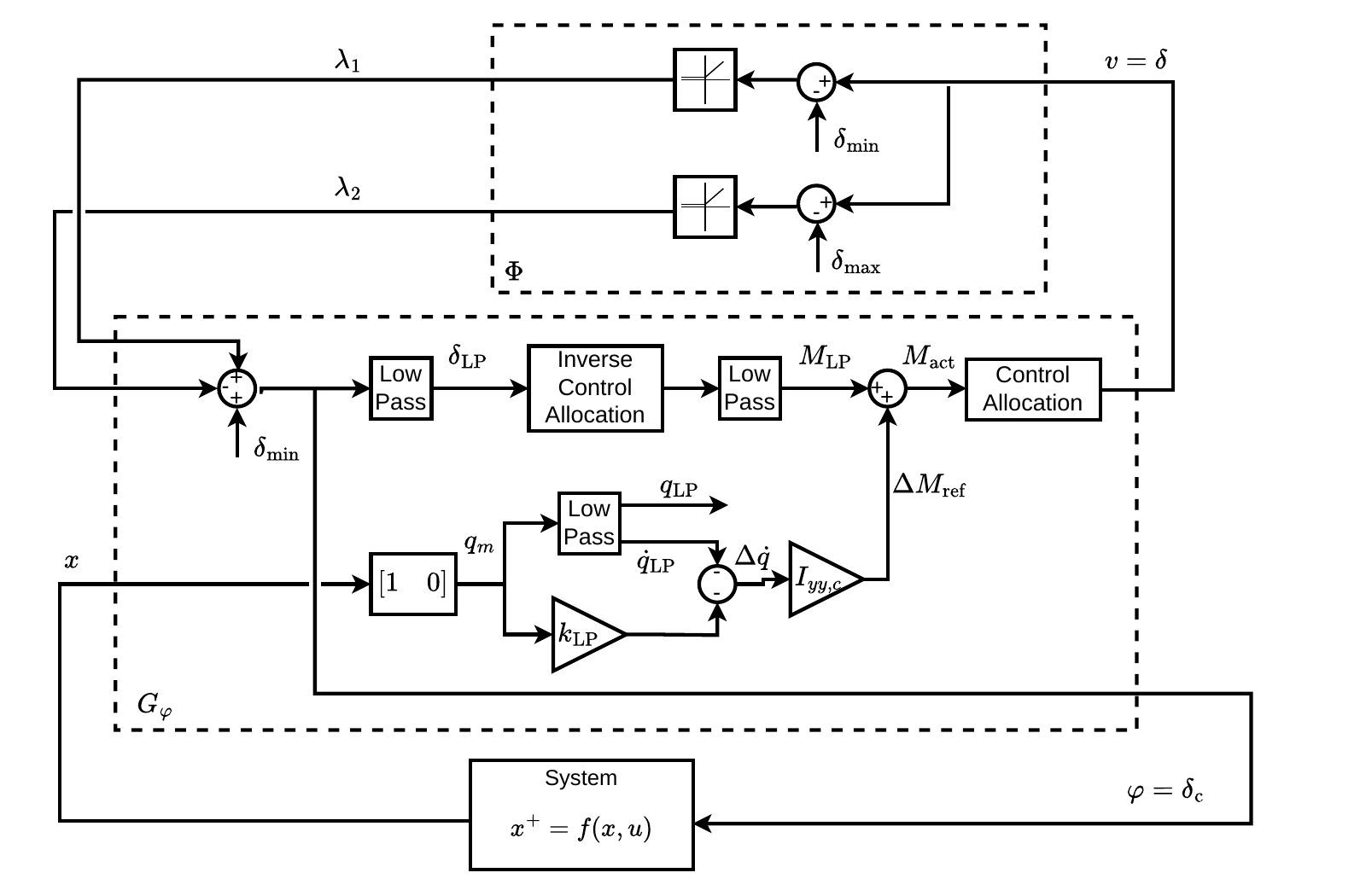}
    \caption{Equivalent representation of the \acrshort{INDI} controller with isolated nonlinearity and linear component. As the stability analysis is conducted with respect to the origin, i.e. $q_{\textrm{ref}}=0$, $q_{\textrm{ref}}$ is omitted in this schematic.}
    \label{fig:IsolatedNonliearity}
\end{figure}

\subsubsection{Equivalent REN \& Augmented System}
\label{sec:EquivalentRENandAugmentedSystem}
In their most general form, \acrshortpl{REN} are defined by a feedback interconnection consisting of a linear system $G_{\varphi}$ and a memoryless nonlinear operator $\phi$ \cite{Revay2024}, as shown in the dashed box in \cref{subfig:PreRENTransformation}. 
Letting the inputs, outputs and internal state variable associated with linear system $G_{\varphi}$ be denoted by $\bigl[\!\begin{smallmatrix} \lambda \\ x \end{smallmatrix}\!\bigr]$, $\bigl[\!\begin{smallmatrix} v \\ \varphi \end{smallmatrix}\!\bigr]$ and $x_{\varphi}$, respectively, a general \acrshort{REN} is described by
\begin{subequations}
    \label{eq:RENdescription}
    \begin{gather}
        \label{eq:RENSSdescription}
        \begin{bNiceArray}{c}
            x_{\varphi}^+ \\ %[1.5pt] 
            v \\
            \varphi            
        \end{bNiceArray}
        =
        \begin{bNiceArray}{c|c@{\hskip 5pt}c}
            A & B_{1} & B_{2} \\[1pt] 
            C_{1} & D_{11} & D_{12} \\
            C_{2} & D_{21} & D_{22} 
            \CodeAfter
            \tikz \draw [transform canvas={yshift=1pt}, shorten > = 0.35em, shorten < = 0.35em](2-|1) -- (2-|last) {};
        \end{bNiceArray}
        \begin{bNiceArray}{c}
            x_{\varphi} \\ %[1pt] 
            \lambda \\
            x            
        \end{bNiceArray}
        +
        \begin{bNiceArray}{c}
            b_{x_{\varphi}} \\ %[1pt] 
            b_{v} \\
            b_{\varphi}
        \end{bNiceArray},
        \\
        \label{eq:RENActivationFunction}
        \lambda_i = \phi_i(v_i) \quad \forall i \in [n_\lambda],
    \end{gather}
\end{subequations}
where $n_\lambda$ represents the number of neurons in the \acrshort{REN}. 
\begin{figure*}[t]
    \centering
    \begin{subfigure}{0.35\textwidth} 
        \centering
         %% Creator: Inkscape 1.1 (c68e22c387, 2021-05-23), www.inkscape.org
%% PDF/EPS/PS + LaTeX output extension by Johan Engelen, 2010
%% Accompanies image file 'RENTransformation_Before.pdf' (pdf, eps, ps)
%%
%% To include the image in your LaTeX document, write
%%   \input{<filename>.pdf_tex}
%%  instead of
%%   \includegraphics{<filename>.pdf}
%% To scale the image, write
%%   \def\svgwidth{<desired width>}
%%   \input{<filename>.pdf_tex}
%%  instead of
%%   \includegraphics[width=<desired width>]{<filename>.pdf}
%%
%% Images with a different path to the parent latex file can
%% be accessed with the `import' package (which may need to be
%% installed) using
%%   \usepackage{import}
%% in the preamble, and then including the image with
%%   \import{<path to file>}{<filename>.pdf_tex}
%% Alternatively, one can specify
%%   \graphicspath{{<path to file>/}}
%% 
%% For more information, please see info/svg-inkscape on CTAN:
%%   http://tug.ctan.org/tex-archive/info/svg-inkscape
%%
\begingroup%
  \makeatletter%
  \providecommand\color[2][]{%
    \errmessage{(Inkscape) Color is used for the text in Inkscape, but the package 'color.sty' is not loaded}%
    \renewcommand\color[2][]{}%
  }%
  \providecommand\transparent[1]{%
    \errmessage{(Inkscape) Transparency is used (non-zero) for the text in Inkscape, but the package 'transparent.sty' is not loaded}%
    \renewcommand\transparent[1]{}%
  }%
  \providecommand\rotatebox[2]{#2}%
  \newcommand*\fsize{\dimexpr\f@size pt\relax}%
  \newcommand*\lineheight[1]{\fontsize{\fsize}{#1\fsize}\selectfont}%
  \ifx\svgwidth\undefined%
    \setlength{\unitlength}{195.70561451bp}%
    \ifx\svgscale\undefined%
      \relax%
    \else%
      \setlength{\unitlength}{\unitlength * \real{\svgscale}}%
    \fi%
  \else%
    \setlength{\unitlength}{\svgwidth}%
  \fi%
  \global\let\svgwidth\undefined%
  \global\let\svgscale\undefined%
  \makeatother%
  \begin{picture}(1,0.63327518)%
    \lineheight{1}%
    \setlength\tabcolsep{0pt}%
    \put(0,0){\includegraphics[width=\unitlength,page=1]{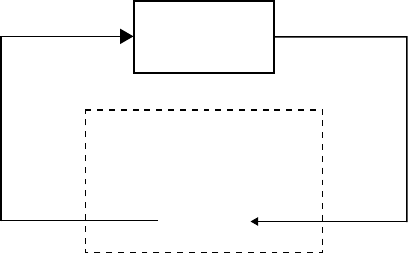}}%
    \put(0.38344241,0.53285429){\makebox(0,0)[lt]{\lineheight{1.25}\smash{\begin{tabular}[t]{l}$x^+ = f(x,u)$\end{tabular}}}}%
    \put(0.05577581,0.12232683){\makebox(0,0)[lt]{\lineheight{1.25}\smash{\begin{tabular}[t]{l}$u(x)$\end{tabular}}}}%
    \put(0.7519792,0.56566617){\makebox(0,0)[lt]{\lineheight{1.25}\smash{\begin{tabular}[t]{l}$x$\end{tabular}}}}%
    \put(0,0){\includegraphics[width=\unitlength,page=2]{RENTransformation_Before.pdf}}%
    \put(0.48412852,0.24785409){\makebox(0,0)[lt]{\lineheight{1.25}\smash{\begin{tabular}[t]{l}$\phi$\end{tabular}}}}%
    \put(0.47730358,0.10231663){\makebox(0,0)[lt]{\lineheight{1.25}\smash{\begin{tabular}[t]{l}{\large $G_\varphi$}\end{tabular}}}}%
    \put(0.23367888,0.19787366){\makebox(0,0)[lt]{\lineheight{1.25}\smash{\begin{tabular}[t]{l}$v$\end{tabular}}}}%
    \put(0.71374434,0.19787366){\makebox(0,0)[lt]{\lineheight{1.25}\smash{\begin{tabular}[t]{l}$\lambda$\end{tabular}}}}%
  \end{picture}%
\endgroup%

        \caption{}
        \label{subfig:PreRENTransformation}
    \end{subfigure}
    \hspace{0.15\textwidth}
    \begin{subfigure}{0.35\textwidth} 
      \centering
      %% Creator: Inkscape 1.1 (c68e22c387, 2021-05-23), www.inkscape.org
%% PDF/EPS/PS + LaTeX output extension by Johan Engelen, 2010
%% Accompanies image file 'RENTransformation_After.pdf' (pdf, eps, ps)
%%
%% To include the image in your LaTeX document, write
%%   \input{<filename>.pdf_tex}
%%  instead of
%%   \includegraphics{<filename>.pdf}
%% To scale the image, write
%%   \def\svgwidth{<desired width>}
%%   \input{<filename>.pdf_tex}
%%  instead of
%%   \includegraphics[width=<desired width>]{<filename>.pdf}
%%
%% Images with a different path to the parent latex file can
%% be accessed with the `import' package (which may need to be
%% installed) using
%%   \usepackage{import}
%% in the preamble, and then including the image with
%%   \import{<path to file>}{<filename>.pdf_tex}
%% Alternatively, one can specify
%%   \graphicspath{{<path to file>/}}
%% 
%% For more information, please see info/svg-inkscape on CTAN:
%%   http://tug.ctan.org/tex-archive/info/svg-inkscape
%%
\begingroup%
  \makeatletter%
  \providecommand\color[2][]{%
    \errmessage{(Inkscape) Color is used for the text in Inkscape, but the package 'color.sty' is not loaded}%
    \renewcommand\color[2][]{}%
  }%
  \providecommand\transparent[1]{%
    \errmessage{(Inkscape) Transparency is used (non-zero) for the text in Inkscape, but the package 'transparent.sty' is not loaded}%
    \renewcommand\transparent[1]{}%
  }%
  \providecommand\rotatebox[2]{#2}%
  \newcommand*\fsize{\dimexpr\f@size pt\relax}%
  \newcommand*\lineheight[1]{\fontsize{\fsize}{#1\fsize}\selectfont}%
  \ifx\svgwidth\undefined%
    \setlength{\unitlength}{195.70527435bp}%
    \ifx\svgscale\undefined%
      \relax%
    \else%
      \setlength{\unitlength}{\unitlength * \real{\svgscale}}%
    \fi%
  \else%
    \setlength{\unitlength}{\svgwidth}%
  \fi%
  \global\let\svgwidth\undefined%
  \global\let\svgscale\undefined%
  \makeatother%
  \begin{picture}(1,0.63338911)%
    \lineheight{1}%
    \setlength\tabcolsep{0pt}%
    \put(0,0){\includegraphics[width=\unitlength,page=1]{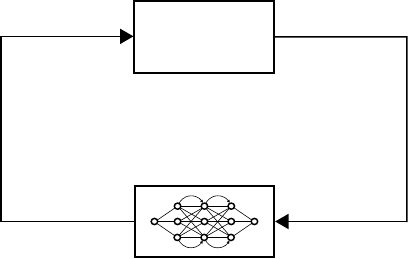}}%
    \put(0.36307978,0.5327426){\makebox(0,0)[lt]{\lineheight{1.25}\smash{\begin{tabular}[t]{l}$\tilde{x}^+ = \tilde{f}(\tilde{x},\tilde{\varphi})$\end{tabular}}}}%
    \put(0.05376934,0.12024305){\makebox(0,0)[lt]{\lineheight{1.25}\smash{\begin{tabular}[t]{l}$\tilde{\varphi}(\tilde{x})$\end{tabular}}}}%
    \put(0.75198121,0.56578001){\makebox(0,0)[lt]{\lineheight{1.25}\smash{\begin{tabular}[t]{l}$\tilde{x}$\end{tabular}}}}%
  \end{picture}%
\endgroup%

      \caption{}
      \label{subfig:PostRENTransformation}
    \end{subfigure}
    \caption{Summarized block diagrams of \cref{fig:IsolatedNonliearity} are shown in  \subref{subfig:PreRENTransformation}: an open-loop system $x^+ = f(x,u)$, where $u(x) = \varphi(x)$ in closed-loop with a \acrshort{REN}-based \acrshort{NNC} shown via fractional transformation in the dotted box, and \subref{subfig:PostRENTransformation} the equivalent system obtained by augmenting the state of and input to the dynamical system with the \acrshort{REN}'s internal state variable $x_\varphi$ and hidden variable $\lambda$, respectively \cite{Detailleur2025}.}
    \centering
    \label{fig:RENTransformation}
\end{figure*}

\begin{thm}[REN equivalence and well-posedness]
    \label{thm:RENEquivalence}
    Consider the saturated \acrshort{INDI} controller and estimator defined by \cref{eq:controllerODEs,eq:qLP,eq:preSatCommand,eq:ReLU}. The discrete-time, well-posed, equivalent \acrshort{REN} formulation, defined by internal state vectors $x_\varphi = [q_{\textrm{LP}}, \delta_{\textrm{LP}}, M_{\textrm{LP}}]\transpose$ and $x = [q_{\textrm{m}}, \dot{q}_{\textrm{m}}]\transpose$, hidden activation functions $\lambda = [\lambda_1, \lambda_2]\transpose$ as defined in \cref{eq:lambda_1}, \cref{eq:lambda_2}, and output $\varphi = \delta_{\textrm{c}}$, is given by
    \begin{subequations}
    \label{eq:REN_matrices}
    \begin{align}
        A &= \begin{bmatrix}
            1 - T_s k_{\textrm{LP}} & 0 & 0 \\[3pt]
            0 & 1 - \tfrac{T_s}{\tau_\delta} & 0 \\[3pt]
            0 & \tfrac{T_s M_d}{\tau_q} & 1 - \tfrac{T_s}{\tau_q}
        \end{bmatrix}, \label{eq:RENA} \\[6pt]
        B_1 &= \begin{bmatrix}
            0 & 0 \\[3pt]
            \tfrac{T_s}{\tau_\delta} & -\tfrac{T_s}{\tau_\delta} \\[3pt]
            0 & 0
        \end{bmatrix}, \quad
        B_2 = \begin{bmatrix}
            T_s k_{\textrm{LP}} & 0 \\[3pt]
            0 & 0 \\[3pt]
            0 & 0
        \end{bmatrix}, \label{eq:RENB} \\[6pt]
        C_1 &= \begin{bmatrix}
            \dfrac{I_{yy,c} k_{\textrm{LP}}}{M_d} & 0 & \dfrac{1}{M_d}
        \end{bmatrix}, \label{eq:RENC} \\[6pt]
        D_{12} &= \begin{bmatrix}
            -\dfrac{I_{yy,c}(k_p + k_{\textrm{LP}})}{M_d} & 0
        \end{bmatrix}, \quad
        D_{21} = \begin{bmatrix} 1 & -1 \end{bmatrix}, \label{eq:REND} \\[6pt]
        b_{x_\varphi} &= \begin{bmatrix}
            0 \\[3pt]
            \tfrac{T_s\delta_{\textrm{min}}}{\tau_\delta} \\[3pt]
            0
        \end{bmatrix}, \quad
        \begin{aligned}
            b_v &= [-\delta_{\textrm{min}}, -\delta_{\textrm{max}}]\transpose, \\
            b_\varphi &= \delta_{\textrm{min}},
        \end{aligned} \label{eq:REN_biases}
    \end{align}
    \end{subequations}
    where the remaining matrices are $D_{11} = 0_{1 \times 2}$, $D_{22} = 0_{1 \times 2}$, and $C_2 = 0_{1 \times 3}$.
\end{thm}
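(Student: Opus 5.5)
The plan is to verify the claimed matrices by direct construction: discretize the controller and estimator dynamics with the same forward-Euler scheme used for the plant, then read off the \acrshort{REN} matrices of \cref{eq:RENSSdescription} by matching coefficients. Well-posedness is handled at the end. Throughout, $x_\varphi = [q_{\textrm{LP}}, \delta_{\textrm{LP}}, M_{\textrm{LP}}]\transpose$, $x = [q_{\textrm{m}}, \dot q_{\textrm{m}}]\transpose$ and $q_{\textrm{ref}} = 0$.

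\emph{Step 1 (state update).} Applying $z^+ = z + T_s \dot z$ to \cref{eq:qLP} gives
\[
q_{\textrm{LP}}^+ = (1 - T_s k_{\textrm{LP}})\,q_{\textrm{LP}} + T_s k_{\textrm{LP}}\,q_{\textrm{m}},
\]
which yields the first row of $A$ and of $B_2$. For \cref{eq:deltaLP}, I first substitute the exact decomposition \cref{eq:satReLU}, $\delta_{\textrm{c}} = \delta_{\textrm{min}} + \lambda_1 - \lambda_2$. This gives
\[
\delta_{\textrm{LP}}^+ = \bigl(1 - \tfrac{T_s}{\tau_\delta}\bigr)\delta_{\textrm{LP}} + \tfrac{T_s}{\tau_\delta}(\lambda_1 - \lambda_2) + \tfrac{T_s \delta_{\textrm{min}}}{\tau_\delta},
\]
which produces the second rows of $A$ and $B_1$ and the bias $b_{x_\varphi}$. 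Discretizing \cref{eq:MLP} produces the third row of $A$; since $M_{\textrm{LP}}$ depends neither on $\lambda$ nor on $x$, the corresponding rows of $B_1$ and $B_2$ are zero.

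\emph{Step 2 (neuron pre-activations).} The key substitution is to eliminate $\dot q_{\textrm{LP}}$ in \cref{eq:preSatCommand} using \cref{eq:qLP}, i.e.\ $\dot q_{\textrm{LP}} = k_{\textrm{LP}}(q_{\textrm{m}} - q_{\textrm{LP}})$. Then $\delta$ is affine in $(x_\varphi, x)$:
\[
\delta = \tfrac{I_{yy,c} k_{\textrm{LP}}}{M_d}\, q_{\textrm{LP}} + \tfrac{1}{M_d}\, M_{\textrm{LP}} - \tfrac{I_{yy,c}(k_p + k_{\textrm{LP}})}{M_d}\, q_{\textrm{m}}.
\]
Comparing with \cref{eq:lambda_1,eq:lambda_2}, the pre-activations are $v_i = \delta - \delta_{\textrm{min}}$ and $v_i = \delta - \delta_{\textrm{max}}$, with $\phi_i = \textrm{ReLU}$. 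Each of these equals $C_1 x_\varphi + D_{12} x$ plus the corresponding entry of $b_v$, and $\lambda$ does not enter, so $D_{11} = 0$.

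A notational point should be made explicit here. Since $v \in \realsN{2}$, the row vectors $C_1$ and $D_{12}$ of \cref{eq:RENC,eq:REND} are to be read as stacked identically for both neurons, i.e.\ $\mathbf{1}_2 C_1$ and $\mathbf{1}_2 D_{12}$. Correspondingly, $D_{11}$ is the $2\times 2$ zero matrix.

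\emph{Step 3 (output).} The output is $\varphi = \delta_{\textrm{c}} = \delta_{\textrm{min}} + \lambda_1 - \lambda_2$, which gives $D_{21} = [1, -1]$, $b_\varphi = \delta_{\textrm{min}}$, and $C_2 = 0$, $D_{22} = 0$. Because \cref{eq:ReLU} is an exact identity for every $\delta \in \mathbb{R}$, the resulting system reproduces the discretized saturated \acrshort{INDI} loop exactly, not just locally.

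\emph{Step 4 (well-posedness).} Well-posedness means that for every $(x_\varphi, x)$ the implicit equation $\lambda = \phi(D_{11}\lambda + \dots)$ has a unique solution. Since $D_{11} = 0$, which is trivially strictly lower triangular, $\lambda$ is an explicit single-valued function of $(x_\varphi, x)$ obtained by evaluating the ReLUs. The standard well-posedness condition for \acrshortpl{REN} \cite{Revay2024} therefore holds immediately.

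I expect no deep obstacle. The only delicate step is Step 2: one must replace the filtered acceleration $\dot q_{\textrm{LP}}$ by its state expression, so that the feedthrough lands in $C_1$ and $D_{12}$ rather than creating an algebraic loop. One must also keep the dimensional bookkeeping of $C_1$ and $D_{12}$ consistent across the two neurons.
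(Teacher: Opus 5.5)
Your proposal is correct and follows essentially the same route as the paper: forward-Euler discretization of the filter ODEs with $\delta_{\textrm{c}} = \delta_{\textrm{min}} + \lambda_1 - \lambda_2$ substituted into the $\delta_{\textrm{LP}}$ update, elimination of $\dot q_{\textrm{LP}}$ in the command to read off $C_1$, $D_{12}$, $b_v$, and well-posedness from $D_{11}=0$. The paper states well-posedness through the diagonal-$\Lambda$ condition $2\Lambda - \Lambda D_{11} - D_{11}^\top\Lambda \succ 0$, whereas your explicit-evaluation argument avoids it, and your remark that $C_1$, $D_{12}$ must be stacked for both neurons with $D_{11} = 0_{2\times 2}$ correctly fixes a dimensional slip in the statement rather than exposing a gap in your argument.
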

\begin{proof}
First, the nonlinearity of the \acrshort{INDI} controller is isolated, as shown in \cref{fig:IsolatedNonliearity}. Then by substituting \cref{eq:satReLU} into \cref{eq:deltaLP}, $B_1$ is obtained. By additionally applying forward-Euler discretization to \cref{eq:controllerODEs,eq:qLP} with sampling period $T_s$, we get \cref{eq:RENA}, \cref{eq:RENB}, and $b_{x_\varphi}$. Rearranging \cref{eq:preSatCommand} and \cref{eq:satReLU} gives \cref{eq:RENC}, \cref{eq:REND}, and \cref{eq:REN_biases}. Therefore, by isolating the linear dynamics of the low-pass filters as the \acrshort{REN}'s linear system $G_\varphi$ and assigning the saturation-related non-linearities to the operator $\phi$, the block-matrix structure in \eqref{eq:RENdescription} is obtained.

Finally, by \cite{Revay2020_EquilibriumNet} the \acrshort{REN} is well-posed if there exists a positive-definite diagonal matrix $\Lambda$, such that
\begin{equation}
    \label{eq:WellPosednessREN}
    2\Lambda - \Lambda D_{11} - D_{11}\transpose\Lambda \succ 0.
\end{equation}
Since $D_{11} = 0_{1 \times 2}$, it can be concluded that any positive-definite diagonal matrix $\Lambda$ satisfies this condition and thus, the \acrshort{REN} is well-posed. This ensures the existence and uniqueness of the internal variables, enabling the application of the stability verification method proposed in \cite{Detailleur2025}.

\end{proof}
\begin{cor}[Augmented saturated state-feedback form]
\label{cor:AugmentedStateFeedbackController}
    The closed-loop system, formed by the feedback interconnection of the system's discrete-time dynamics and the \acrshort{REN}-based controller of \cref{thm:RENEquivalence}, is equivalent to an augmented system defined by state $\tilde x = [x\transpose, x_\varphi\transpose]\transpose$ and a saturated state-feedback controller $\tilde\varphi = [\lambda\transpose, \varphi]\transpose$ given by
\begin{subequations}
\label{eq:augmentedDynamicsGroup}
\begin{align}
    & \tilde x^+ = \tilde A\,\tilde x + \tilde B\,\tilde\varphi(\tilde{x})
                 + \begin{bmatrix} 0_{2\times1} \\ b_{x_\varphi} \end{bmatrix}, \label{eq:augmentedSystem} \\[6pt]
    \tilde A &= \begin{bmatrix} A_p & 0_{2\times 3} \\ B_2 & A \end{bmatrix}, \quad
    \tilde B = \begin{bmatrix} 0_{2\times 2} & B_p \\ B_1 & 0_{3\times 1} \end{bmatrix}. \label{eq:augmentedMatrices} 
\end{align}
\end{subequations}
\end{cor}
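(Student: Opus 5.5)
The plan is to stack the plant update with the REN state update and check that the resulting block matrices are exactly $\tilde A$ and $\tilde B$. In other words, the equivalence is verified directly by substitution.

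First I write the discrete-time plant as $x^+ = A_p x + B_p \delta_{\textrm{c}}$, using the forward-Euler matrices from \cref{sec:SystemModel}. Since the REN output is $\varphi = \delta_{\textrm{c}}$, the closed loop sets $u = \varphi$. Next I read off the REN state update from \cref{eq:RENSSdescription}: $x_\varphi^+ = A x_\varphi + B_1 \lambda + B_2 x + b_{x_\varphi}$. This holds with $\lambda$ given by \cref{eq:lambda_1,eq:lambda_2} evaluated at $v = C_1 x_\varphi + D_{12} x + b_v$. The $D_{11}$ term drops out because $D_{11}=0$.

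By \cref{thm:RENEquivalence} the REN is well-posed. This is the one non-routine point, and it is immediate here: since $D_{11}=0$ there is no algebraic loop in $v$, so $\lambda$ is an explicit function of $(x, x_\varphi)$. Likewise $\varphi = D_{21}\lambda + b_\varphi = \delta_{\textrm{min}} + \lambda_1 - \lambda_2$, because $C_2=0$ and $D_{22}=0$. Hence $\tilde\varphi = [\lambda\transpose,\varphi]\transpose$ is a well-defined static map of $\tilde x = [x\transpose, x_\varphi\transpose]\transpose$. It is the composition of an affine map with ReLUs, which justifies calling it a saturated state-feedback operator.

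Stacking the two updates then gives
\[
\begin{bmatrix} x^+ \\ x_\varphi^+ \end{bmatrix}
= \begin{bmatrix} A_p & 0 \\ B_2 & A \end{bmatrix}\begin{bmatrix} x \\ x_\varphi\end{bmatrix}
+ \begin{bmatrix} 0 & B_p \\ B_1 & 0 \end{bmatrix}\begin{bmatrix}\lambda\\ \varphi\end{bmatrix}
+ \begin{bmatrix} 0 \\ b_{x_\varphi}\end{bmatrix},
\]
which is exactly \cref{eq:augmentedSystem} with the matrices in \cref{eq:augmentedMatrices}. The dimensions check out: $B_p$ is $2\times1$, $B_1$ is $3\times2$, and $B_2$ is $3\times2$.

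The converse direction is also needed for equivalence. Every trajectory of the augmented system, projected onto $x$ and $x_\varphi$, is a trajectory of the interconnection. This follows because the defining equations are identical and both systems have unique solutions by well-posedness. The only mild obstacle is the bookkeeping that $\varphi$ enters only the plant and $\lambda$ enters only the estimator. This is why $\tilde B$ has its off-diagonal block structure rather than something like $[B_1 \; B_p]$.
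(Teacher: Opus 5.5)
Your proof is correct and follows essentially the paper's route. The paper cites the stacking procedure of \cite{Detailleur2025} and notes that $\lambda$ and $\varphi$ depend only on the current $\tilde x$ through $v=\tilde K\tilde x+b_v$ with $\tilde K=[D_{12}\;C_1]$; you carry out that stacking explicitly and check the blocks of $\tilde A$ and $\tilde B$ directly. One cosmetic point inherited from \cref{thm:RENEquivalence}: since $v\in\mathbb{R}^2$, the single rows $C_1$ and $D_{12}$ should be duplicated, i.e.\ $v=[1,\,1]\transpose(D_{12}x+C_1x_\varphi)+b_v$, and this does not affect your argument.
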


\begin{proof}
    The augmented closed-loop state-space representation is obtained by applying the procedure established in \cite[Sec.~III-B]{Detailleur2025} to the feedback-interconnection of the system and the \acrshort{REN} representation of the \acrshort{INDI} controller of \cref{thm:RENEquivalence}. 
    
    The resulting augmented controller $\tilde\varphi(\tilde x)$ corresponds to a memoryless nonlinear operator with feedback-gain matrix 
    \begin{equation}
        \tilde{K} = \begin{bmatrix}
            D_{12} & C_1
        \end{bmatrix}.
    \end{equation} 
    Consequently, the evaluation of the hidden activations $\lambda$ and the saturation output $\varphi$ depends solely on the current value of $\tilde{x}$ via the linear mapping $v = \tilde K\tilde x + b_v$. Thus, the dynamic feedback interconnection is transformed into a static nonlinearity.
\end{proof}

Prior to obtaining the final semialgebraic sets and formulating the \acrshortpl{SDP}, all states are normalized to ensure numerical conditioning of the optimization problems. The augmented state $\tilde{x} \in \realsN{5}$ is rescaled to normalized coordinates $\bar{x} = D_x^{-1}\tilde{x}$ using the diagonal scaling matrix
\begin{equation}
  D_x = \mathrm{diag}\!\bigl(
    s_{q_{\textrm{m}}},\; s_{\dot{q}_{\textrm{m}}},\; s_{q_{\textrm{LP}}},\;
    s_{\delta_{\textrm{LP}}},\; s_{M_{\textrm{LP}}}\bigr),
  \label{eq:scalingMatrix}
\end{equation}
where each entry reflects the characteristic physical magnitude of the corresponding state variable: $s_{q_{\textrm{m}}} = 3~\text{rad s$^{-1}$}$, $s_{\dot{q}_{\textrm{m}}} = 730~\text{rad s$^{-2}$}$, $s_{q_{\textrm{LP}}} = 3~\text{rad s$^{-1}$}$, $s_{\delta_{\textrm{LP}}} = 0.1\pi$, and $s_{M_{\textrm{LP}}} = M_d \cdot \delta_{\textrm{max}}\;\si{N\,m}$. For simplicity of notation, the normalized, augmented system will henceforth be denoted as
\begin{equation}
    \label{eq:NormalizedAugmentedSystem}
    \bar{x}^+ = \bar{f}(\bar{x}, \tilde{\varphi}(\bar{x})).
\end{equation}

\subsubsection{Semialgebraic Representation of the Closed-Loop System}
\label{sec:SemialgebraicSetModel}
To obtain the final \acrshort{SOS}-compatible system model, the nonlinear memoryless operator $\tilde{\varphi}(\tilde{x})$ and the closed-loop dynamics must be formulated as semialgebraic sets.

The exact piecewise-linear behavior of the actuator saturation is captured by representing the \acrshort{ReLU} activations via polynomial (in)equality constraints, as each \acrshort{ReLU} activation $\lambda_i = \max(0, v_i) \; \text{for } i\in\{1,2\}$ is described exactly by the semialgebraic set \cite{Korda2022, Detailleur2025}:
\begin{equation}
  \bigl\{(v_i,\lambda_i)\;\big|\;
  \lambda_i \ge 0,\;
  \lambda_i - v_i \ge 0,\;
  \lambda_i(\lambda_i - v_i) = 0
  \bigr\},
  \label{eq:ReLUset}
\end{equation}
with pre-activations $v_1 = \delta - \delta_{\textrm{min}}$ and $v_2 = \delta - \delta_{\textrm{max}}$.

The exact input-output relationship of the memoryless operator $\tilde{\varphi}$ is then expressed as the set:
\begin{equation}
\label{eq:SemialgebraicNetworkSet}
\mathbf{K}_{\tilde{\varphi}} = \Bigg\{ \bigg(\bar{x}, 
\begin{bmatrix} \lambda \\ \tilde{\varphi} \end{bmatrix} \bigg)
\; \Bigg| \; 
g^{\tilde{\varphi}}(\bar{x}, \lambda, \tilde{\varphi}) \geq 0, \;
h^{\tilde{\varphi}}(\bar{x}, \lambda, \tilde{\varphi}) = 0 \Bigg\},
\end{equation}
where $g^{\tilde{\varphi}}$ and $h^{\tilde{\varphi}}$ denote the vector-valued polynomials aggregating the inequalities and equalities of \cref{eq:ReLUset}, respectively. To simplify the notation in the remainder of this work, the variables spanning $\mathbf{K}_{\tilde{\varphi}}$ are grouped into the stacked vector $\zeta = [\bar{x}\transpose, \, \lambda\transpose, \, \tilde{\varphi}\transpose]\transpose \in \realsN{n_{\bar{x}}+n_\lambda+n_{\tilde{\varphi}}} = \realsN{n_\zeta}$.

Subsequently, to provide the \acrshort{SOS} program with the necessary information on how the closed-loop system evolves over time, the current state and activation $(\bar{x}, \tilde{\varphi})$ are coupled with the subsequent state and activation $(\bar{x}^+, \tilde{\varphi}^+)$. To compactly represent this coupling, the state-transition variables are defined as the vector $\xi = [\bar{x}\transpose, \lambda\transpose, \, \tilde{\varphi}\transpose, \, ({\bar{x}}^{+})\transpose, \, (\lambda^+)\transpose, \, ({\tilde{\varphi}^+})\transpose]\transpose \in \realsN{2(n_{\bar{x}} + n_\lambda + n_{\tilde{\varphi}}) - n_{\bar{x}}} = \realsN{n_\xi}$. This temporal evolution is then expressed as the semialgebraic set
\begin{equation}
    \label{eq:SemialgebraicComposedLoopSet}  
    \mathbf{K}_L =  \Bigg\{ \xi \in \realsN{n_\xi} \; \bigg\vert \;  g^L(\xi) \geq 0, \; h^L(\xi) = 0 \, \Bigg\},
\end{equation}
with
\begin{equation}
    g^L = \begin{bmatrix}
        g^{\tilde{\varphi}}(\zeta) \\
        g^{\tilde{\varphi}}(\zeta^+)
    \end{bmatrix}, \quad
    h^L = \begin{bmatrix}
        h^{\tilde{\varphi}}(\zeta) \\
        h^{\tilde{\varphi}}(\zeta^+) \\
        \bar{x}^+ - \bar{f}(\bar{x}, \tilde{\varphi}) 
    \end{bmatrix}.
\end{equation}
Together, $(\mathbf{K}_{\tilde\varphi}, \mathbf{K}_L)$ completely and exactly describe the saturated closed-loop system in a format compatible with \acrshort{SOS} programming.
\begin{rem}
    Following the methodology in \cite{Detailleur2025}, the semialgebraic sets $(\mathbf{K}_{\tilde\varphi}, \mathbf{K}_L)$ are augmented with redundant \textit{cross terms}, i.e. additional polynomial constraints derived from the products of the existing inequalities. The inclusion of these terms reduces the conservatism of the \acrshort{SOS} relaxations and facilitates the search for a feasible Lyapunov certificate.
\end{rem}

\subsection{Local Stability Analysis using Sum of Squares}
Once the exact semialgebraic model $(\mathbf{K}_{\tilde\varphi}, \mathbf{K}_L)$ is established, a local stability analysis with respect to the origin is conducted. To this end, a locally valid Lyapunov function and an inner estimate of the system's \acrshort{ROA} are sought over a predefined region $\bar{\mathcal{Q}}$.
\begin{defn}[Local Lyapunov Function, Def. B.12 \cite{Rawlings2017}]
\label{defn:LyapunovFunction}
Let $\mathcal{X} \subseteq \bar{\mathcal{Q}}$ be a positively invariant set containing the origin. A function $V: \mathcal{X} \to \mathbb{R}_{\geq 0}$ is defined as a local Lyapunov function for the system \labelcref{eq:NormalizedAugmentedSystem} if there exist functions $\alpha_1, \alpha_2 \in \mathcal{K}_{\infty}$ and a continuous positive-definite function $\alpha_3$ such that for all $\bar{x} \in \mathcal{X}$:
\begin{subequations}\label{eq:V_conditions}
    \begin{alignat}{1}
    V(\bar{x}) &\geq \alpha_1(\|\bar{x}\|) \label{eq:V_conditions_a}, \\
    V(\bar{x}) &\leq \alpha_2(\|\bar{x}\|) \label{eq:V_conditions_b}, \\
     V(\bar{x}) - V\big(\bar{f}(\bar{x},\tilde{\varphi}(\bar{x}))\big) &\geq \alpha_3(\|\bar{x}\|)\label{eq:V_conditions_c}.
\end{alignat}
\end{subequations}
\end{defn}
The \acrshort{ROA} is then defined as the set of all initial states $\bar{x}(0) \in \bar{\mathcal{Q}}$ such that $\lim_{k\to\infty} \|\bar{x}(k)\| = 0$. In this framework, the \acrshort{ROA} is estimated via the largest invariant sub-level set $\mathcal{L}_\gamma(V)$ of the Lyapunov function. The search is formulated as a sequence of two \acrshortpl{SDP} designed to identify a Lyapunov candidate and maximize its corresponding sub-level set, respectively. The consecutive feasibility of these optimization problems provides a formal certificate that the resulting polynomial $V(\bar{x})$ constitutes a locally valid Lyapunov function for the closed-loop system \cite{Detailleur2025}.

The remainder of this section is structured to reflect the steps of the stability verification procedure. First, \cref{sec:SearchRegionQ} describes the determination of the bounded search region $\bar{\mathcal{Q}}$. Subsequently, the search for the Lyapunov candidate function $V(\bar{x})$ is detailed in \cref{sec:SearchForALocalLyapunovCandidateFunction}. Finally, \cref{sec:SearchForLargestSublevelSet} presents the optimization of the largest sub-level set $\mathcal{L}_\gamma(V)$.

\subsubsection{Defining the Bounded Search Region $\bar{\mathcal{Q}}$}
\label{sec:SearchRegionQ}
To synthesize a locally valid Lyapunov function, it is necessary to define a bounded search region within which the Lyapunov candidate function must strictly decrease. This region is defined via the normalized augmented state-space as the semialgebraic set
\begin{equation}
    \bar{\mathcal{Q}} = \big\{ \bar{x} \in \realsN{n_{\bar{x}}} \mid \bar{q}(\bar{x}) \geq 0 \big\},
    \label{eq:LocalRegionQDefinition}
\end{equation}
such that the origin $0 \in \textrm{int}(\bar{\mathcal{Q}})$. The bounding function is chosen to be strictly quadratic in the normalized augmented states, defined as
\begin{equation}
    \bar{q}(\bar{x}) = \alpha - \bar{x}\transpose \bar{Q} \bar{x},
\end{equation}
where $\alpha > 0$ is a tuning parameter that determines the overall volume of the search domain.

To improve the numerical tractability of the \acrshort{SOS} optimization, the shape of the search region must be carefully initialized. As established by \cref{cor:AugmentedStateFeedbackController}, the augmented system behaves as a saturated state-feedback controller, thus, the bounding matrix $\bar{Q}$ is parameterized as $\bar{Q} = P_\ell$, where $P_\ell \succ 0$ is the solution to the discrete Lyapunov equation
\begin{equation}
  \bar{A}_{cl}\transpose P_\ell\,\bar{A}_{cl} - P_\ell = -I,
  \label{eq:discreteLyapunov}
\end{equation}
with $\bar{A}_{cl} = D_x^{-1} ( \tilde{A} + \tilde{B}\tilde{K}) D_x$ representing the normalized, linearized closed-loop system matrix.

\subsubsection{Search for a Local Lyapunov Candidate Function $V$}
\label{sec:SearchForALocalLyapunovCandidateFunction}
In the next step, the first \acrshort{SDP} is formulated to obtain locally valid Lyapunov candidate function. To reduce conservatism in the stability analysis, the candidate function is not restricted to depend solely on the system states. Instead, it is constructed as a polynomial function of the augmented state, the hidden network activation functions, and the memoryless operator outputs. Hence, the Lyapunov candidate function is parameterized by
\begin{equation}
{\scalebox{0.90}{$
\begin{aligned}
    \label{eq:LyapunovParam}
        V(\bar{x}) &= V_{\zeta}(\bar{x}, \lambda(\bar{x}), \tilde{\varphi}(\bar{x}))\\ &
        \begin{multlined}
            = \sigma^V\big(\bar{x}, \lambda(\bar{x}), \tilde{\varphi}(\bar{x})\big) \\
            + \sigma^V_{\textrm{ineq}}(\bar{x}, \lambda(\bar{x}), \tilde{\varphi}(\bar{x}))\transpose
            \underbrace{\begin{bmatrix}
                \mathcal{M}\big(g^{\tilde{\varphi}}(\bar{x}, \lambda(\bar{x}), \tilde{\varphi}(\bar{x})), 1\big) \\
                \mathcal{M}(g^{\tilde{\varphi}}(\bar{x}, \lambda(\bar{x}), \tilde{\varphi}(\bar{x})), 2\big) \\
                \vdots \\
                \mathcal{M}\big(g^{\tilde{\varphi}}(\bar{x}, \lambda(\bar{x}), \tilde{\varphi}(\bar{x})), n\big)
                \end{bmatrix}}_{g^V},
                \end{multlined}
\end{aligned}
$}}
\end{equation}
with parameters $\sigma^V$, $\sigma^V_{\textrm{ineq}}$ representing any scalar \acrshort{SOS} polynomial and any vector of \acrshort{SOS} polynomials, respectively. Here $n$ can be chosen to increase the expressivity of the candidate Lyapunov function, and thus enlarge the solution space, at the cost of increased computational complexity.

A valid local Lyapunov candidate function must strictly decrease along the system trajectories within the defined search region $\bar{\mathcal{Q}}$. To enforce this decrease condition locally, the semialgebraic description $\mathbf{K}_L$ of the composed loop is bounded by incorporating the polynomial inequality $\bar{q}(\bar{x}) \ge 0$. Applying the Positivstellensatz yields the condition \cite{Detailleur2025, Parrilo2003}
\begin{equation}
{\scalebox{0.90}{$
\begin{aligned}
    & V_\zeta(\zeta) - V_\zeta(\zeta^+) - \|\bar{x}\|_P^2 -  p^{\Delta V}_{\textrm{eq}}(\xi)\transpose h^L(\xi)
    \\
    & \ \  - \sigma^{\Delta V}(\xi) - \sigma^{\Delta V}_{\textrm{ineq}}(\xi) \transpose 
     \begin{bmatrix}
        \mathcal{M}\Big( \left[\!\begin{smallmatrix} \bar{q}(\bar{x}) \\ g^L(\xi) \end{smallmatrix}\!\right]^{\phantom{|}}\!\!, 1 \Big)
        \\
        \mathcal{M}\Big( \left[\!\begin{smallmatrix} \bar{q}(\bar{x}) \\ g^L(\xi) \end{smallmatrix}\!\right]^{\phantom{|}}\!\!, 2 \Big)
        \\
        \vdots \\
        \mathcal{M}\Big( \left[\!\begin{smallmatrix} \bar{q}(\bar{x}) \\ g^L(\xi) \end{smallmatrix}\!\right]^{\phantom{|}}\!\!, n \Big)
    \end{bmatrix} 
    \geq 0, \  \forall \xi \in \realsN{n_{\xi}},
    \label{eq:LocalLyapunovDecreaseCond}
\end{aligned}
$}}
\end{equation}
with weighting matrix $P$. Note that the superscript $\Delta V$ is chosen to indicate the membership of a variable to the decrease condition. This condition is interpreted as an \acrshort{SOS} constraint in the \acrshort{SDP}
\begin{subequations}
    \label{eq:SDPFormulationLocalAsymptoticStability}
    \begin{alignat}{4}
        &\span\span \text{find:} \ & P, \, \sigma^V, \, \sigma^V_{\textrm{ineq}}, \ \ \, \,
        & \!\!\! \! \! \! \! \sigma^{\Delta V},  \, \sigma^{\Delta V}_{\textrm{ineq}}, \,  p^{\Delta V}_{\textrm{eq}} \span \nonumber \\
        &\span\span \text{s.t.} \ & \eqref{eq:LyapunovParam}, \ & \eqref{eq:LocalLyapunovDecreaseCond}, \span \\
        &\span\span & \sigma^V, \sigma^V_{\textrm{ineq}}, \sigma^{\Delta V}, \sigma^{\Delta V}_{\textrm{ineq}} \quad & \text{\acrshort{SOS} polynomials,} \label{eq:LocalAsymptoticStabilitySOS} \\
        &\span\span & 
        p^{\Delta V}_{\textrm{eq}} \quad & \text{arbitrary polynomials,} \label{eq:LocalAsymptoticStabilityP} \\
        &\span\span & P \quad & \succ 0. \label{eq:LocalAsymptoticStabilityMatrixP}
    \end{alignat}
\end{subequations}
Because $\bar{\mathcal{Q}}$ is not known a priori to be positively invariant, a general solution to \acrshort{SDP} \labelcref{eq:SDPFormulationLocalAsymptoticStability} is not guaranteed to define a valid Lyapunov function.

\subsubsection{Search for the Largest Lyapunov Sub-Level Set $\mathcal{L}_\gamma(V)$}
\label{sec:SearchForLargestSublevelSet}
Once a solution to \acrshort{SDP} \labelcref{eq:SDPFormulationLocalAsymptoticStability} has been obtained, certifying that this solution defines a valid Lyapunov function for the closed-loop system requires finding a positively invariant set $\mathcal{X} \subseteq \bar{\mathcal{Q}}$, which concurrently forms an inner estimate of the system's \acrshort{ROA}. By \cref{eq:LocalLyapunovDecreaseCond}, a trivial class of invariant sets within $\bar{\mathcal{Q}}$ consists of the sublevel sets of $V$, defined as $\mathcal{L}_\gamma(V) \coloneq \big\{ \bar{x} \in \realsN{n_{\bar{x}}} \mid V(\bar{x}) \leq \gamma \big\}$. Therefore, a second optimization problem is formulated to find the largest sublevel set of $V$ contained entirely in $\bar{\mathcal{Q}}$.

To restrict the search to sublevel sets that are strictly contained within the set $\bar{\mathcal{Q}}$, i.e. $\mathcal{L}_\gamma(V) \subseteq \bar{\mathcal{Q}}$, we consider the condition
\begin{equation}
{\scalebox{0.90}{$
    \begin{aligned}
        \sigma^{\mathcal{Q}}_q(\zeta) \bar{q}(\bar{x}) &\geq p^{\mathcal{Q}}_{\textrm{eq}}(\zeta)\transpose h^{\tilde{\varphi}}(\zeta) + \sigma^{\mathcal{Q}}(\zeta) + \sigma^{\mathcal{Q}}_V(\zeta) \big( \gamma - V_\zeta(\zeta) \big) \\ 
        &\quad + \sigma^{\mathcal{Q}}_{\textrm{ineq}}(\zeta)\transpose  \begin{bmatrix}
            \mathcal{M}\big(g^{\tilde{\varphi}}(\zeta), 1 \big) \\
            \mathcal{M}\big(g^{\tilde{\varphi}}(\zeta), 2 \big) \\
            \vdots        \\
            \mathcal{M}\big(g^{\tilde{\varphi}}(\zeta), n \big)
        \end{bmatrix}, \  \forall \zeta \in \realsN{n_{\zeta}},
    \end{aligned}
$}}
\label{eq:SublevelSetCond}
\end{equation}
with $p^{\mathcal{Q}}_{\textrm{eq}}$ representing a vector of arbitrary polynomials, $\sigma^{\mathcal{Q}}_q$, $\sigma^{\mathcal{Q}}_V$ and $\sigma^{\mathcal{Q}}$ representing scalar \acrshort{SOS} polynomials, and $\sigma^{\mathcal{Q}}_{\textrm{ineq}}$ representing a vector of \acrshort{SOS} polynomials. The function $V_\zeta$ is fixed to the solution obtained from \acrshort{SDP} \labelcref{eq:SDPFormulationLocalAsymptoticStability}. Applying the Positivstellensatz over the static network constraints of $\mathbf{K}_{\tilde{\varphi}}$, it follows that the satisfaction of \cref{eq:SublevelSetCond} is a sufficient condition to guarantee $\bar{q}(\bar{x}) \geq 0$ for all $\bar{x} \in \mathcal{L}_{\gamma}(V)$. 

\begin{figure*}[t]
    \centering
    \begin{subfigure}[t]{0.49\textwidth}  
        \centering
        \centering
        \includegraphics[width=\linewidth]{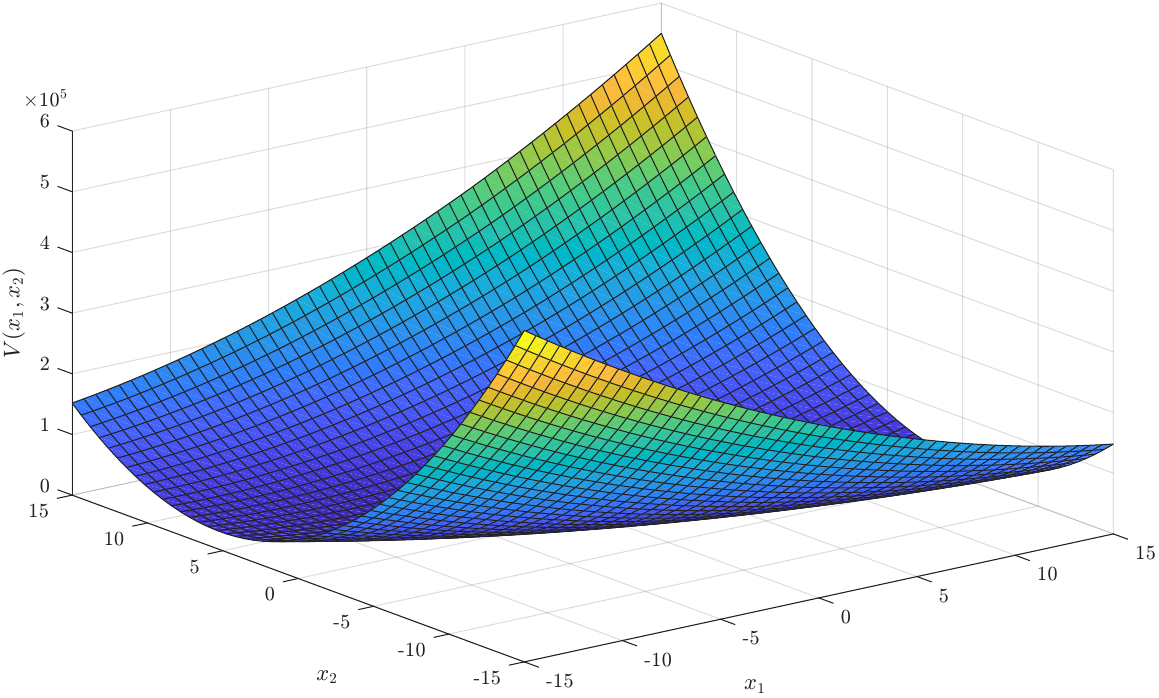}
        \caption{Surface plot of the \acrshort{SOS}-returned polynomial $V(\bar{x}_1, \bar{x}_2)$ on the slice $\bar{x}_\varphi = 0$. The function is non-negative over the domain shown and has its minimum at the origin. 
        }
        \label{fig:LyapunovSurface}
    \end{subfigure}
    \hfill
    \begin{subfigure}[t]{0.49\textwidth}  
        \centering
        \includegraphics[width=\linewidth]{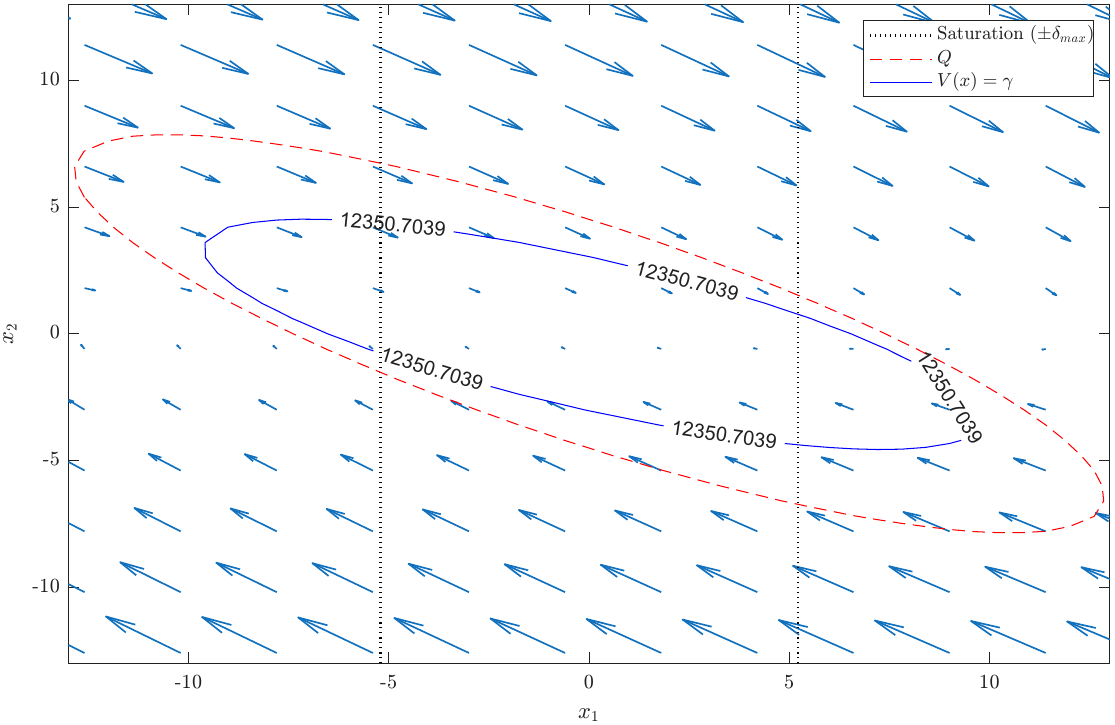}
        \caption{Phase portrait of the saturated \acrshort{INDI} closed-loop on the slice $\bar{x}_\varphi = 0$ in normalized coordinates $(\bar{x}_1, \bar{x}_2)$. Solid blue: boundary of the certified sublevel set $\mathcal{L}_\gamma(V)$ with $\gamma = 12350.70$. Red dashed: boundary of the region $\bar{\mathcal{Q}}$ with $\alpha = 300$. Black dotted: saturation limits $\pm\delta_{\textrm{max}}$ projected onto the $(\bar{x}_1, \bar{x}_2)$ plane. The sublevel set crosses both saturation limits, confirming the certificate covers a genuinely saturating region.} 
        \label{fig:PhasePortrait}
    \end{subfigure}

    \caption{Lyapunov function found via an \acrshort{SOS}-based stability verification approach.} 
    \label{fig:StabilityAnalysis}
\end{figure*}

Viewing \cref{eq:SublevelSetCond} as an \acrshort{SOS} constraint leads to the optimization problem
\begin{subequations}
    \label{eq:SDPFormulationLargestSublevelSet}
    \begin{alignat}{4}
        &\span\span  \underset{ \substack{ \gamma, \; \sigma^{\mathcal{Q}}_q, \, \sigma^{\mathcal{Q}}, \\ \sigma^{\mathcal{Q}}_V, \, \sigma^{\mathcal{Q}}_\textrm{ineq}, \, p^{\mathcal{Q}}_{\textrm{eq}}}}{\textrm{maximize:}} \ &   & \! \! \! \! \! \! \! \! \gamma  \span \nonumber \\
        &\span\span \text{s.t.} \ &  & \! \! \! \! \! \! \! \! \! \! \! \! \!   \eqref{eq:SublevelSetCond}, \span \label{eq:SDPFormulationLargestSublevelSet_SublevelSetCond} \\
        &\span\span & \sigma^{\mathcal{Q}}_q, \sigma^{\mathcal{Q}}, \sigma^{\mathcal{Q}}_V, \sigma^{\mathcal{Q}}_{\textrm{ineq}} \quad & \text{\acrshort{SOS} polynomials,} \label{eq:LargestSublevelSetSOS} \\
        &\span\span & p^{\mathcal{Q}}_{\textrm{eq}} \quad & \text{arbitrary polynomials,} \label{eq:LargestSublevelSetP}
    \end{alignat}
\end{subequations}
which is solved as an \acrshort{SDP} by fixing $\sigma^{\mathcal{Q}}_V$ to a strictly positive \acrshort{SOS} polynomial. 

If consecutively solving \acrshortpl{SDP} \labelcref{eq:SDPFormulationLocalAsymptoticStability,eq:SDPFormulationLargestSublevelSet} yields a solution, this solution with $V_\zeta(0) < \gamma$ defines a valid local Lyapunov function \cite{Detailleur2025}, formally proving that $\mathcal{X} = \mathcal{L}_\gamma(V)$ forms an inner estimate of the closed-loop system's \acrshort{ROA}.

\section{Results}
\label{sec:Results}
In this section, the numerical results of the stability verification procedure applied to the saturated \acrshort{INDI} pitch-rate loop are presented. First, practical implementation details are discussed. This is followed by an analysis of the synthesized local Lyapunov function and the resulting certified \acrshort{ROA} estimate. Finally, the findings are discussed with respect to the aircraft's physical authority limits and the conservativeness of the obtained stability boundaries.

\subsection{Implementation Details}
\label{sec:ImplementationDetails}
The nominal controller and system parameters at the $V_a = 20$m/s fixed-wing trim condition are listed in \cref{tab:nominalParameters} and are taken from\cite{Schlatter2024, Schenk2026}. All matrices are discretized using an forward-Euler discretization with a sampling period $T_s = 0.01~\text{s}$. All \acrshortpl{SDP} are solved using MOSEK\cite{mosek} to a numerical tolerance of $10^{-9}$. The Lyapunov candidate $V$ and the decrease condition $\sigma^{\Delta V}$ are chosen as degree-4 SOS polynomials. All associated multipliers and equality-constraint polynomials are set to degree 2. Additionally, second degree cross-terms are considered to facilitate the search for a Lyapunov function, yielding a $12$ dimensional vector $g^{\tilde\varphi}$. Since $\alpha$ defines the volume of the set $\bar{\mathcal{Q}}$, it is increased iteratively until \cref{eq:SDPFormulationLocalAsymptoticStability} is no longer feasible, yielding a maximum value of $\alpha = 300$.

\begin{table}[t]
  \caption{Nominal parameter values at the $V_a = 20~\text{m/s}$
           fixed-wing trim condition\cite{Schlatter2024,Schenk2026}.}
  \label{tab:nominalParameters}
  \centering
  \begin{tabular}{llll}
    \toprule
    Symbol & Value & Unit & Origin \\
    \midrule
    \multicolumn{4}{l}{\textit{System parameters}} \\
    $I_{yy,m}$  & $0.025$ & $\si{kg\,m^2}$      & physical airframe \\
    $\tau_{\textrm{act}}$ & $0.05$  & $\si{s}$             & servo model \\
    $M_m$        & $-8.4$  & $\si{N\,m\,rad^{-1}}$ & trim, \cite{Schlatter2024} \\
    $d_q$        & $3.92$  & $\si{N\,m\,s\,rad^{-1}}$ & trim, \cite{Schlatter2024} \\
    \midrule
    \multicolumn{4}{l}{\textit{Controller parameters (nominal = matched)}} \\
    $I_{yy,c}$  & $I_{yy,m}$ & $\si{kg\,m^2}$      & matched \\
    $M_d$        & $M_m$      & $\si{N\,m\,rad^{-1}}$ & matched \\
    $k_p$        & $20$       & $\si{rad\,s^{-1}}$  & \cite{Schlatter2024} \\
    $\tau_q$     & $0.04$    & $\si{s}$             & estimator corner \\
    $k_{\textrm{LP}}$     & $25$      & $\si{rad\,s^{-1}}$  & $= 1/\tau_q$ \\
    $\tau_\delta$ & $0.05$   & $\si{s}$             & $= \tau_{\textrm{act}}$ \\
    $\delta_{\textrm{max}}$ & $2\pi/9$ & $\si{rad}$          & \cite{Schlatter2024} \\
    \bottomrule
  \end{tabular}
\end{table}

\subsection{Numerical Results}
\label{sec:NumericalResults}
Solving \cref{eq:SDPFormulationLocalAsymptoticStability} yields a degree-4 Lyapunov polynomial $V(\bar{x})$. \Cref{fig:LyapunovSurface} visualizes this function on the slice $\bar{x}_\varphi = 0$, where the controller internal states are fixed at equilibrium.

Solving \cref{eq:SDPFormulationLargestSublevelSet} with $\alpha = 300$ returns $\gamma \approx 12350$, yielding $\mathcal{X} = \mathcal{L}_\gamma(V)$ as a positively invariant inner estimate of the \acrshort{ROA}. \Cref{fig:PhasePortrait} visualizes this set on the $\bar{x}_\varphi = 0$ slice. Notably, the certified region reaches into the projected saturation limits $\pm\delta_{\textrm{max}}$, confirming that the certificate accounts for active actuator saturation. While only a slice of the Lyapunov function is shown, the full five-dimensional function satisfies the Lyapunov conditions \cref{eq:V_conditions} by construction over the certified region $\mathcal{X}$.

\section{Discussion \& Conclusion}
\label{sec:Conclusion}
To provide a formal stability certificate for the saturated \acrshort{INDI} architecture, this work shows that the saturated \acrshort{INDI} controller is equivalent to a \acrshort{REN}. Utilizing the structural equivalence between this model and closed-loop augmented state-feedback interconnection, \acrshort{SOS} programming is employed to solve a stability verification problem. This procedure yields both a locally valid Lyapunov function and a certified estimate of the \acrshort{ROA}. The results show that the \acrshort{INDI} architecture is structurally capable of maintaining stability even when control authority is momentarily exceeded. This provides a mathematically rigorous tool for safety-critical certification that of a linear system together with nonlinearity present in the system.

Future work will first focus on expanding the certified \acrshort{ROA} by implementing the iterative alternating sequence of \acrshortpl{SDP} to directly optimize the \acrshort{ROA}, as proposed in \cite{Detailleur2025}. To address model mismatch, the framework will be extended to provide robust stability certificates by incorporating sector and slope constraints on the control effectiveness. Furthermore, the analysis will be scaled to full nonlinear attitude dynamics to account for the cross-coupling effects inherent in high-performance maneuvers.

\section*{References}
\bibliographystyle{IEEEtranBST/IEEEtran}
\renewcommand{\section}[2]{}%
\bibliography{IEEEtranBST/IEEEabrv,mybibfile}

\vskip -2\baselineskip plus -1fil
\begin{IEEEbiographynophoto}{Dalim Wahby} received the B.Sc. degree in industrial engineering and management from Karlsruhe Institute of Technology (KIT), Karlsruhe, Germany, in 2022, with a focus on energy technologies and natural language processing. Additionally, he received the Dipl. Ing. in electronics and embedded systems from Polytech Nice-Sophia, Sophia Antipolis, France in 2024, and the M.Sc. in ICT innovation from Royal Institute of Technology (KTH), Stockholm, Sweden in 2025, with a major in electrical engineering.

He has completed a research internship at CNRS, focusing on adaptive control and the stability analysis of neural-network-based controllers. Currently, he is pursuing the Ph.D. degree in automatic signal and image processing at i3S/CNRS in Sophia-Antipolis, under the supervision of Guillaume Ducard, focusing on the development of a framework for the design and the analysis of neural-network-based controllers.

\end{IEEEbiographynophoto}
\vskip -2\baselineskip plus -1fil
\begin{IEEEbiographynophoto}{Lorenzo Schenk} received the B.Sc. degree in mechanical engineering from ETH Zurich, Zurich, Switzerland, in 2025. He is currently pursuing the M.Sc. degree in robotics, systems and control at ETH Zurich, Zurich, Switzerland.

He has been a Teaching Assistant for several courses at ETH Zurich and a Research Assistant with the Institute for Dynamic Systems and Control, ETH Zurich, under the supervision of Prof. Raffaello D'Andrea, Prof. Christopher Onder, and Prof. Guillaume Ducard. He is currently a Visiting Researcher with the Robotics Institute, Carnegie Mellon University, Pittsburgh, PA, USA, working with Prof. Sebastian Scherer.

\end{IEEEbiographynophoto}

\vskip -2\baselineskip plus -1fil
\begin{IEEEbiographynophoto}{Guillaume Ducard} (Senior Member, IEEE), received the M.Sc. degree in electrical engineering and the
Doctoral degree focusing on flight control for unmanned aerial vehicles (UAVs) from ETH
Zurich, Zurich, Switzerland, in 2004 and 2007, respectively.

He completed his two-year Postdoctoral course in 2009 from ETH Zurich, focused
on flight control for UAVs. He is currently an Associate Professor with the
Universit{\'e} C\^{o}te d`Azur, France, and guest scientist with ETH Zurich. His research interests include nonlinear control, neural networks, estimation, and guidance mostly applied to UAVs.
\end{IEEEbiographynophoto}
\end{document}